\setlist[itemize,enumerate]{
topsep=0.4em,
itemsep=0.2em,
parsep=0em}
\setlist[itemize,enumerate,2]{
topsep=0.2em
}
\crefname{@theorem}{theorem}{theorem}
\newcommand{\Mrel}{M_{\mathrm{relax}}}
\newcommand{\Mext}{M_{\mathrm{ext}}}
\newcommand{\Jext}{\mathcal{J}_{\mathrm{ext}}}
\DeclareMathOperator{\argmax}{argmax}
\DeclareMathOperator{\rank}{rank}
\newtheorem{theorem}{Theorem}
\newtheorem{definition}[theorem]{Definition}
\newtheorem{lemma}[theorem]{Lemma}
\title{A Simple Combinatorial Algorithm for Robust Matroid Center}
\author{Georg Anegg\thanks{ETH Zurich. Email: ganegg@ethz.ch. Research supported in part by Swiss National Science Foundation grant number 200021\_184622.} \and Laura Vargas Koch\thanks{ETH Zurich. Email: lvargas@ethz.ch} \and
Rico Zenklusen\thanks{ETH Zurich. Email: ricoz@ethz.ch. Research supported in part by Swiss National Science Foundation grant number 200021\_184622. This project has received funding from the European Research Council (ERC) under the European Union's Horizon 2020 research and innovation programme (grant agreement No 817750).}}
\date{}
\begin{document}

\maketitle

\begin{abstract}
\small\baselineskip=9pt
Recent progress on robust clustering led to constant-factor approximations for Robust Matroid Center.
After a first combinatorial $7$-approximation that is based on a matroid intersection approach, two tight LP-based $3$-approximations were discovered, both relying on the Ellipsoid Method.
In this paper, we show how a carefully designed, yet very simple, greedy selection algorithm gives a $5$-approximation.
An important ingredient of our approach is a well-chosen use of Rado matroids.
	This enables us to capture with a single matroid a relaxed version of the original matroid, which, as we show, is amenable to straightforward greedy selections.
\end{abstract}
 
\begin{tikzpicture}[overlay, remember picture, shift = {(current page.south west)}]
\begin{scope}[shift={(8.1,1.8)}]
\def\hd{2.5}
\node at (-2.15*\hd,0) {\includegraphics[height=0.7cm]{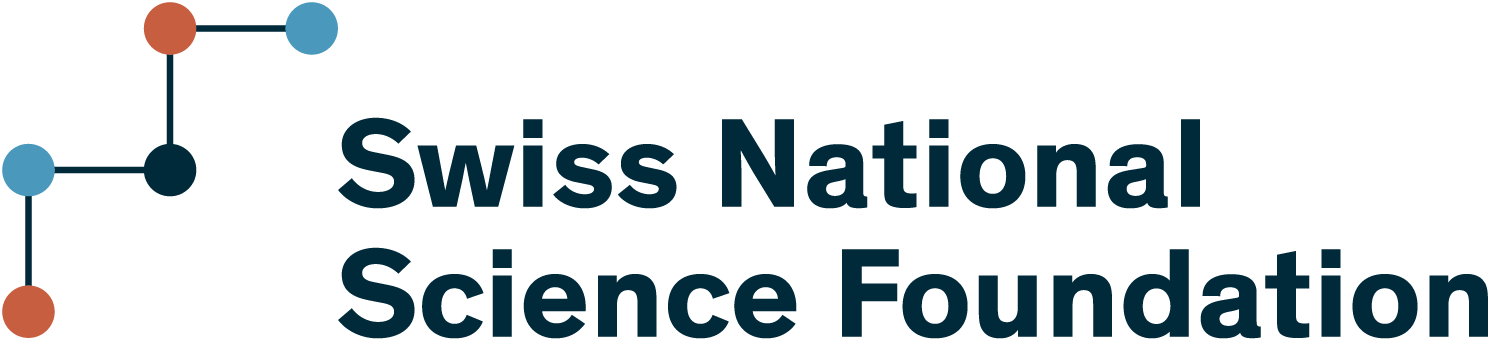}};
\node at (-\hd,0) {\includegraphics[height=1.0cm]{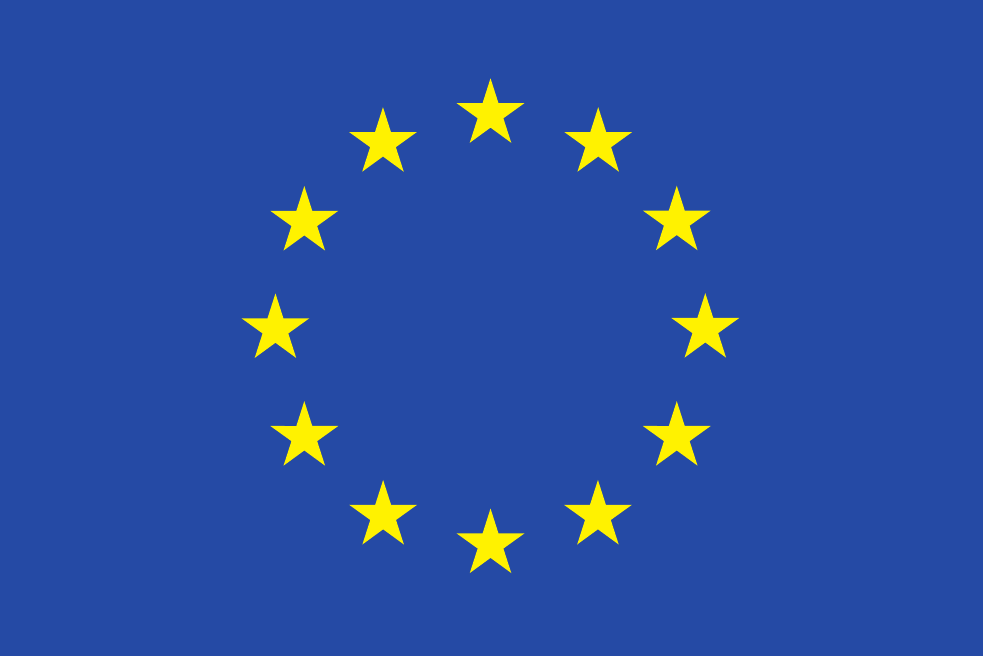}};
\node at (-0.2*\hd,0) {\includegraphics[height=1.2cm]{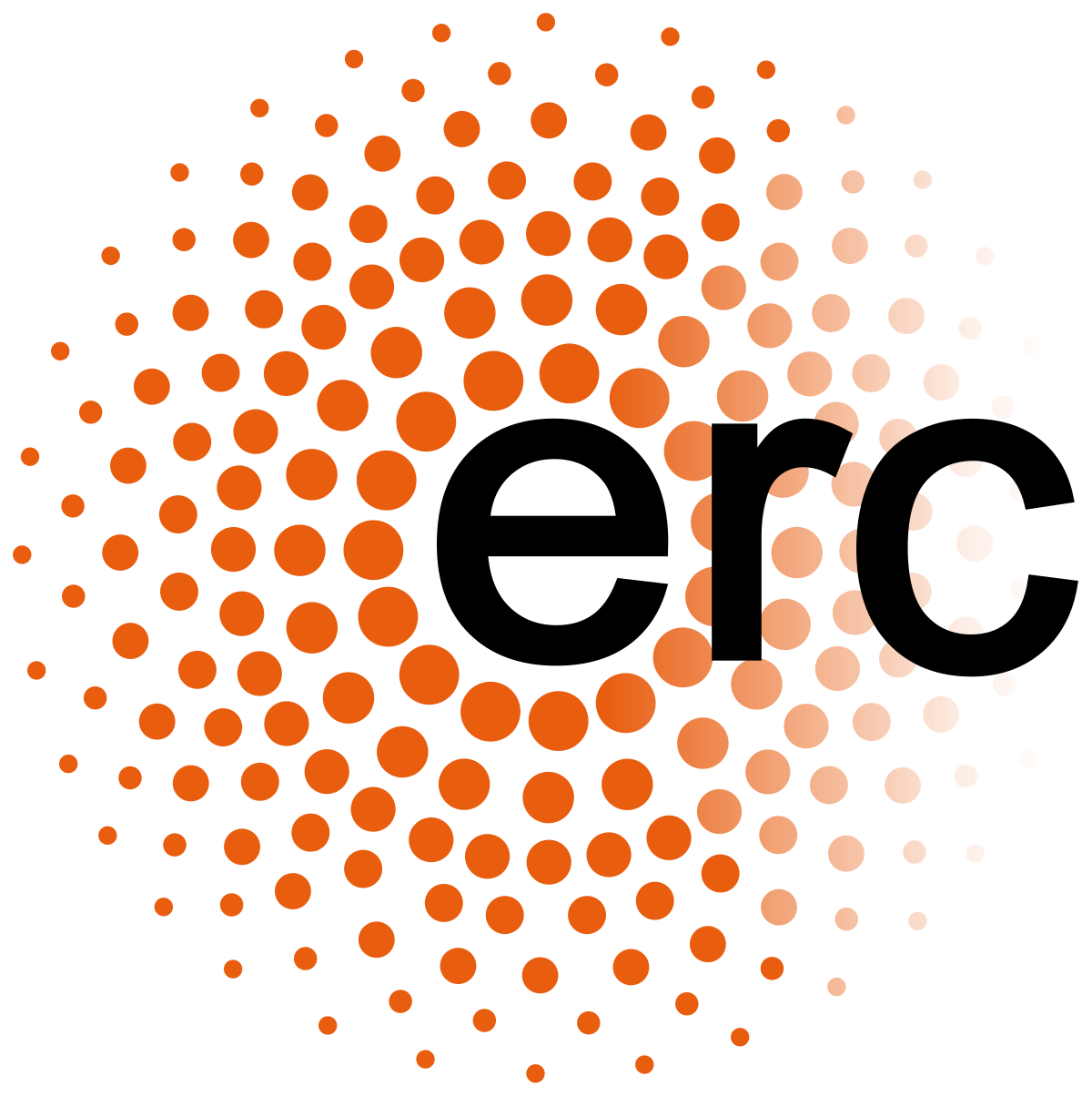}};
\end{scope}
\end{tikzpicture}

\section{Introduction}

Robust clustering (also referred to as clustering with outliers) is a widely applicable family of clustering problems, which has been studied in a variety of settings, see, e.g., \cite{CKMN01, CLLW16, HPST19, CN19} for more information on Robust $k$-Center variants.
Robust clustering variants, like Robust $k$-Center, do not require to cluster all points, but only at least a given number $m$ of points, i.e., one can declare all except $m$ many points as \emph{outliers} and not cluster them.

For the classical $k$-Center problem, tight $2$-approximations are known through two different fast combinatorial algorithms by \textcite{G85,HS86}, both of which are natural and straightforward.
While Robust $k$-Center also admits (LP-based) $2$-approximations (see \textcite{chakrabarty_2016_non-uniform}), the best combinatorial algorithm is an elegant greedy algorithm by \textcite{CKMN01} which gives a $3$-approximation.
For the Robust Matroid Center problem, which we define below, the situation is somewhat similar, but based on more involved algorithms.
(In the paper,
we use the notation $B(u,r)\coloneqq \{v\in X \mid d(u,v)\leq r\}$ for a metric space $(X, d)$ and $u\in X$ as well as the shorthand $B(U,r)\coloneqq \bigcup_{u\in U} B(u,r)$ for $U\subseteq X$. Moreover, $w(U) \coloneqq \sum_{u \in U} w(u)$ for any $w\colon X\to \mathbb{R}_{\geq 0}$.)
\begin{definition}[Robust Matroid Center]
Let $(X,d)$ be a finite metric space with weights $w\colon X\to \mathbb{R}_{\geq 0}$, let $M=(X,\mathcal{I})$ be a matroid, and let $m\in \mathbb{R}_{\geq 0}$ be a lower bound on the weight to be clustered.\footnote{We recall that a \emph{matroid} $M=(X,\mathcal{I})$ consist of a finite ground set $X$ and a nonempty family $\mathcal{I}\subseteq 2^X$ of subsets of $X$ (called \emph{independent sets}) fulfilling
\begin{enumerate*}[label=(\roman*)]
\item if $I\in \mathcal{I}$ and $J\subseteq I$ then $J\in \mathcal{I}$, and
\item if $I,J\in \mathcal{I}$ with $|J|>|I|$ then $\exists e\in J\setminus I$ with $I\cup\{e\}\in \mathcal{I}$
\end{enumerate*}.
We make the standard assumption that a matroid $M=(X,\mathcal{I})$ is given through an independence oracle, which, when queried with a set $S\subseteq X$, returns whether $S\subseteq \mathcal{I}$.
}
The Robust Matroid Center problem asks to determine the smallest possible $r\in \mathbb{R}_{\geq 0}$ together with a set $C\subseteq X$ such that
\begin{enumerate}[nosep]
\item $C\in \mathcal{I}$, and
\item $w(B(C,r)) \geq m$.\footnote{
Observe that when $M$ is the uniform matroid of rank $k$, i.e., when $\mathcal{I}=\{C\subseteq X\colon |C|\leq k\}$, we recover the Robust $k$-Center problem, and when additionally we also have $m=|X|$, we recover the classic $k$-Center problem.
}
\end{enumerate}
\end{definition}
The first constant-factor approximation for Robust Matroid Center is due to \textcite{CLLW16}, who presented a combinatorial $7$-approximation through a reduction to matroid intersection.
Later on, tight $3$-approximations have been obtained by \textcite{HPST19,CN19}, both being LP-based and using the Ellipsoid Method.
The tightness of these $3$-approximations follows from a result by \textcite{HS86}, who showed that no $(3-\epsilon)$-approximation exists unless $\P=\NP$.

Contrary to Robust $k$-Center, no constant-factor approximation through a simple greedy selection procedure was known for Robust Matroid Center prior to our work.
The goal of this work is to address this gap.

The above-mentioned greedy algorithm of \cite{CKMN01} for Robust $k$-Center, or other canonical greedy approaches, do not easily extend to the matroid setting.
The key issue when trying to pick centers greedily in the Robust Matroid Center setting is that the selection of a center earlier on may restrict future candidates in important ways.
To get around this, \cite{CLLW16} adapt the algorithm of \cite{CKMN01} by not picking centers immediately but using the greedy procedure to first carefully construct an auxiliary problem, which can be solved via matroid intersection in a second step.
Solving this auxiliary problem amounts to optimizing globally where to open centers.\footnote{It has later been observed that important steps of \cite{CLLW16} do not intrinsically rely on the matroid structure and can be extended to any down-closed families of feasible facilities \cite{AVZ22}.}

\paragraph{Our contribution.}
We improve on the best combinatorial algorithm by presenting a $5$-approxi\-mate greedy selection algorithm that is both simpler and stronger than the combinatorial $7$-approximation of \cite{CLLW16}.

\begin{theorem}
\label{thm:5approx}
There is a combinatorial $5$-approximation algorithm for Robust Matroid Center.
\end{theorem}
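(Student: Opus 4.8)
The plan is to combine the classical ``guess the radius'' reduction with a ``relax, greedily select, realize'' scheme. Since some optimal radius lies in the finite set $R\coloneqq\{d(u,v)\mid u,v\in X\}\cup\{0\}$, it suffices to give a polynomial-time procedure that, on input $r\in R$, either reports failure or returns $C\in\mathcal{I}$ with $w(B(C,5r))\geq m$, and that never fails when $r$ equals the optimal radius $r^\star$; running it for all $r\in R$ and keeping the best output proves the theorem. So fix $r\coloneqq r^\star$, let $C^\star\in\mathcal{I}$ with $w(B(C^\star,r))\geq m$ be an (unknown) optimal solution, and aim to produce some $C\in\mathcal{I}$ with $w(B(C,5r))\geq m$.

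The first ingredient is a greedy clustering at scale $r$ in the spirit of \cite{CKMN01}: keeping a set $U$ of not-yet-covered points with $U=X$ initially, repeatedly pick $v$ maximizing $w(B(v,r)\cap U)$, declare it a \emph{representative} with \emph{private ball} $P_v\coloneqq B(v,r)\cap U$ and \emph{credit} $w_v\coloneqq w(P_v)$, then delete $B(v,3r)\cap U$ from $U$, iterating until $U=\emptyset$. This yields representatives $v_1,\dots,v_t$ whose private balls are pairwise disjoint (each later private ball is disjoint from the $3r$-ball, hence from the $r$-ball, of every earlier representative), so the credits are additive, and they are non-increasing in the selection order since at each step the previous maximizer is gone and $U$ has only shrunk.

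The obstacle is that a heavy subset of $\{v_1,\dots,v_t\}$ need not lie in $\mathcal{I}$, and committing to one representative can destroy the independence of later ones. To sidestep this I would replace $M$ by a single \emph{relaxed} matroid obtained as a Rado (induced) matroid: form the bipartite ``proximity'' graph joining each $v_i$ to every $x\in X$ with $d(v_i,x)\leq 4r$ (the genuine centers allowed to stand in for $v_i$), and let $\Mrel$ be the matroid induced on $\{v_1,\dots,v_t\}$ from $M$ through this graph. That $\Mrel$ is a matroid is the classical closure of matroids under bipartite induction; this is seen cleanly via an extended matroid $\Mext$ on the edge set of the proximity graph (with independent sets $\Jext$), which moreover supplies, for every $\Mrel$-independent $S$, a system of distinct genuine centers $(c_i)_{i\in S}$ with $c_i\in B(v_i,4r)$ and $\{c_i\mid i\in S\}\in\mathcal{I}$. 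Now run the textbook matroid greedy on $\Mrel$ with the credits $w_{v_i}$ (processing representatives in non-increasing credit order, keeping the chosen set $\Mrel$-independent) to obtain a maximum-credit $\Mrel$-independent set $S^\star$, realize it through $\Mext$ by genuine centers $(c_i)_{i\in S^\star}$, and output $C\coloneqq\{c_i\mid i\in S^\star\}$.

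Correctness has three parts. Feasibility, $C\in\mathcal{I}$, is immediate. For the radius, any $p\in P_{v_i}$ with $i\in S^\star$ has $d(c_i,p)\leq d(c_i,v_i)+d(v_i,p)\leq 4r+r=5r$, so $\bigcup_{i\in S^\star}P_{v_i}\subseteq B(C,5r)$ and $w(B(C,5r))\geq\sum_{i\in S^\star}w_{v_i}$. It thus remains to show $\sum_{i\in S^\star}w_{v_i}\geq m$, and as matroid greedy returns a maximum-credit independent set, it suffices to exhibit \emph{some} $\Mrel$-independent set of representatives of total credit at least $m$ --- this is the heart of the argument and the step I expect to be the main obstacle. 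The natural candidate is $\sigma(C^\star)$, where $\sigma(c^\star)$ is the representative whose $3r$-deletion first removes a point of $B(c^\star,r)$: a common point of $B(c^\star,r)$ and $B(v_i,3r)$ shows $d(c^\star,\sigma(c^\star))\leq 4r$, so $\sigma$ run backwards certifies (via $C^\star\in\mathcal{I}$) that $\sigma(C^\star)$ is $\Mrel$-independent; and greedy maximality at step $\sigma(c^\star)$ together with the fact that no point of $B(c^\star,r)$ was removed earlier gives $w_{\sigma(c^\star)}\geq w(B(c^\star,r)\cap U_{\sigma(c^\star)})$, with these residual balls jointly covering all of $B(C^\star,r)$. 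The delicate point --- where the relaxation's slack and the non-increasing credits must earn their keep --- is that $\sigma$ need not be injective, so naively summing double-counts credits; closing this gap requires either refining the charging to match distinct $c^\star$ to distinct representatives without losing total credit, or passing to the integral matroid polytope of $\Mrel$ (a fractional charging has value $\geq m$, so the polytope's maximum, attained at an independent set, is $\geq m$). Finally, for $r<r^\star$ no feasible $C^\star$ exists and the procedure may truthfully report maximum credit below $m$, while for $r=r^\star$ the above yields $w(B(C,5r^\star))\geq m$, completing the argument.
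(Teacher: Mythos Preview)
Your two-phase scheme (unconstrained greedy clustering, then a max-credit independent set in a Rado matroid on the resulting representatives) is genuinely different from the paper's, and the gap you flagged is fatal rather than merely delicate: the target inequality $\sum_{i\in S^\star} w_{v_i}\geq m$ is false in general. Take the uniform matroid of rank~$2$ on $X=\{a,b,c\}$ with weights $w(a)=10$, $w(b)=w(c)=6$ and distances $d(a,b)=d(a,c)=2r$, $d(b,c)=4r$. The optimum at radius $r$ is $C^\star=\{a,b\}$ with $w(B(C^\star,r))=16$, so set $m=16$. Your phase~1 picks $v_1=a$ with credit $10$, deletes $B(a,3r)=X$, and halts with a single representative; hence every independent set of your Rado matroid on $\{v_1\}$ has credit at most $10<16$. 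Neither proposed fix can help here: with only one representative no injective re-charging is possible, and the matroid polytope over $\{v_1\}$ is just $[0,1]$ with maximum weighted value $10$. The underlying reason is that by decoupling the greedy from the matroid, a single phase-1 step may absorb (through its $3r$-deletion) several optimal balls whose combined weight exceeds the credit it earns, and phase~2 then has too few representatives---and too little total credit---to account for them.

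The paper avoids this by folding the matroid constraint into the greedy itself: at each of $\rank(M)$ iterations it picks $t_i$ maximizing $w(B(t,r)\cap U)$ subject to $\{t_1,\dots,t_i\}$ lying in the Rado relaxation $\Mrel$ (with $2r$ proximity, not $4r$), and only afterwards finds representatives $R\in\mathcal{I}$. Crucially, the analysis never routes through disjoint private-ball credits; it proves $w(B(T,3r))\geq w(B(F,r))$ directly by an element-by-element exchange in an auxiliary Rado matroid $\Mext$ that contains both $T$ and a copy $\overline{F}$ of the optimal centers as bases. Swapping $\overline{f_j}$ for $t_j$ works because at iteration $j$ the point $f_j$ was itself a feasible candidate in the constrained maximization, yielding $w(B(t_j,r)\cap U)\geq w(B(f_j,r)\cap U)$---precisely the comparison your unconstrained phase~1 cannot supply.
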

To get around the selection problem discussed earlier, we relax the matroid condition and allow each selected center to be represented by any nearby point.
This can be captured by an auxiliary matroid known as Rado matroid. A given set of centers is independent in the Rado matroid if there is an appropriate independent set (in the original matroid) of representatives close-by to the given centers.
We then apply a natural greedy algorithm with respect to this relaxed matroid, and show that this addresses the above-mentioned problem of selecting centers greedily in a canonical way.
Hence, instead of greedily opening centers right away, our algorithm greedily selects small areas in which centers will (and can) be opened.
We exploit the matroid properties of our relaxed matroid in the analysis by showing that it allows for continuously transforming an optimal solution (say of radius $r$) to our solution (of radius $5r$), center by center.
We obtain coverage guarantees at every step during this transformation, which, eventually, imply that our solution covers at least as much as an optimal one and thus fulfills the coverage requirement.

In contrast, classical greedy approaches attempt to fix an exact location at every step, which leads to the above-explained issues.
Also, contrary to all prior algorithms for Robust Matroid Center, we obtain during the execution of our greedy procedure immediate guarantees on point sets that will be covered.
Due to this, we only need to run our greedy procedure for rank many iterations whereas prior algorithms first partition the entire input space into a possibly large number of parts.

The bottleneck of the runtime of our algorithm is determined by $|X|$ calls to an independence oracle of a Rado matroid induced by $M$. A generic method to solve this problem is by applying a matroid intersection algorithm, whose runtime might vary significantly depending on the given matroid. 
For a general matroid $M$, the combinatorial matroid intersection algorithm by \textcite{CLSSW19} applied to $M$ and the transversal matroid needed for our Rado matroid gives a runtime of $O(|X| \cdot \rank(M) \log(\rank(M)) \cdot \max \{ \mathcal{T}_{ind}, \mathcal{T}_{|X|,\rank(M)} \})$, where $\mathcal{T}_{ind}$ denotes the runtime of the independence oracle of $M$ and  $\mathcal{T}_{a,b} $ denotes the time to compute a maximum cardinality matching in a bipartite graph with $a$ vertices on one side and $b$ on the other side. Note that this corresponds to the independence oracle of our transversal matroid, with $a = \rank(M)$ and $b=|X|$.
This can be solved through a variety of known techniques, including the $O((\rank(M) + |X|)^{5/2})$ time algorithm by Hopcroft and Karp~\cite{HK73}. 

\section{Algorithm}\label{sec:algorithm}

Throughout this section we consider a fixed instance of the Robust Matroid Center problem.
Hence, we are given a finite metric space $(X,d)$ with weights $w\colon X\to \mathbb{R}_{\geq 0}$, a matroid $M=(X,\mathcal{I})$, and a lower bound $m\in \mathbb{R}_{\geq 0}$ on the total weight to be clustered.
Our algorithm, as other procedures in this context, aims at finding a good clustering given a guess  $r\in \mathbb{R}_{\geq 0}$ for the radius of an optimal solution.
Formally, we obtain the following guarantee.
\begin{theorem}\label{thm:fromRToFiveR}
There is an efficient combinatorial algorithm that, for any $r\in \mathbb{R}_{\geq 0}$, if a solution of radius $r$ exists, then the algorithm returns a solution of radius at most $5r$.
\end{theorem}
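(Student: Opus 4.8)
The plan is to fix the guess $r$ and a feasible solution $C^{*}\in\mathcal I$ with $w(B(C^{*},r))\ge m$, introduce a Rado matroid $\Mrel$ on $X$ that is a \emph{relaxation} of $M$, run a plain coverage-greedy on $\Mrel$, and then certify the output by continuously deforming $C^{*}$ into the greedy solution while keeping the covered weight above $m$. Concretely, let $\Mrel$ be the matroid induced on $X$ from $M$ through the bipartite graph joining $u\in X$ to every $v\in X$ with $d(u,v)\le 2r$; that is, $S\subseteq X$ is independent in $\Mrel$ iff there is an injection $\phi\colon S\to X$ with $d(s,\phi(s))\le 2r$ for all $s\in S$ and $\phi(S)\in\mathcal I$. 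Two facts will be used throughout: $\mathcal I\subseteq\mathcal I_{\Mrel}$ (each set represents itself), so $C^{*}$ is independent in $\Mrel$; and $\rank(\Mrel)=\rank(M)=:\rho$, since representatives lie in $X$ and must be independent in $M$. An independence oracle for $\Mrel$ is precisely the bipartite matching / matroid intersection routine discussed after \Cref{thm:5approx}.

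The algorithm is the obvious greedy for maximizing the monotone submodular coverage function $U\mapsto w(B(U,3r))$ subject to independence in $\Mrel$: build $s_{1},\dots,s_{\rho}$ one by one, where $s_{i}$ maximizes the marginal gain $w\bigl(B(u,3r)\setminus B(\{s_{1},\dots,s_{i-1}\},3r)\bigr)$ over all $u$ with $\{s_{1},\dots,s_{i-1},u\}\in\mathcal I_{\Mrel}$. Put $S=\{s_{1},\dots,s_{\rho}\}$, take a witnessing system of representatives $\phi(S)\in\mathcal I$, and return $\phi(S)$ with radius $5r$. Feasibility holds because $\phi(S)\in\mathcal I$, and since $d(s_{i},\phi(s_{i}))\le 2r$ we have $B(s_{i},3r)\subseteq B(\phi(s_{i}),5r)$, hence $w(B(\phi(S),5r))\ge w(B(S,3r))$. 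So the entire statement reduces to the coverage claim $w(B(S,3r))\ge m$; the runtime is as stated in the introduction (one greedy pass over $\Mrel$).

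For the coverage claim I would fix a basis $O_{0}$ of $\Mrel$ with $C^{*}\subseteq O_{0}$; then $w(B(O_{0},3r))\ge w(B(C^{*},3r))\ge w(B(C^{*},r))\ge m$. Now deform $O_{0}$ into $S$ along single-element exchanges, one per greedy step, keeping a basis $O_{i}$ with $\{s_{1},\dots,s_{i}\}\subseteq O_{i}$: if $s_{i}\in O_{i-1}$ set $O_{i}=O_{i-1}$, and otherwise let $Z$ be the fundamental circuit of $s_{i}$ with respect to $O_{i-1}$, pick $e_{i}\in Z\setminus\{s_{1},\dots,s_{i}\}$ (nonempty, since the independent set $\{s_{1},\dots,s_{i}\}$ contains no circuit), and set $O_{i}=O_{i-1}-e_{i}+s_{i}$, again a basis. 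Since $e_{i}\notin\{s_{1},\dots,s_{i-1}\}$ the subset invariant is preserved, and after $\rho$ steps $O_{\rho}\supseteq S$ with $|O_{\rho}|=\rho=|S|$, so $O_{\rho}=S$; thus if the invariant $w(B(O_{i},3r))\ge m$ survives every exchange we are done. The hook that should make it survive is that $e_{i}$ is a legal greedy candidate at step $i$ — indeed $\{s_{1},\dots,s_{i-1},e_{i}\}\subseteq O_{i-1}$ is independent — so the greedy rule yields $w\bigl(B(s_{i},3r)\setminus B(\{s_{1},\dots,s_{i-1}\},3r)\bigr)\ge w\bigl(B(e_{i},3r)\setminus B(\{s_{1},\dots,s_{i-1}\},3r)\bigr)$.

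The hard part is exactly turning this inequality into the coverage invariant. The greedy comparison is against the region covered by the greedy \emph{prefix} $\{s_{1},\dots,s_{i-1}\}$, whereas replacing $e_{i}$ by $s_{i}$ inside $O_{i-1}$ changes the covered weight against the region covered by $O_{i-1}\setminus\{e_{i}\}$, which is in general strictly larger; a priori the other elements of $O_{i-1}$ could already cover most of $B(s_{i},3r)$ and wipe out the gain. Bridging this gap is where the real work lies, and I expect it to force the deformation path to be chosen non-greedily: route it along a symmetric basis-exchange bijection between $O_{0}$ and $S$ and process the exchanges in a carefully chosen order, so that at the moment $s_{i}$ enters, the current basis outside the greedy prefix is irrelevant on $B(s_{i},3r)$ and $B(e_{i},3r)$; this is plausibly cleanest when phrased inside an extended matroid $\Mext$ on an enlarged ground set in which both $C^{*}$ and the greedy solution embed as independent sets and single exchanges are easier to control. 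Everything else — feasibility, the passage from radius $3r$ at the $s_{i}$'s to $5r$ at the representatives, existence of the exchange elements, and the runtime — is routine.
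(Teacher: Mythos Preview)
Your diagnosis of the gap is accurate, and your instinct to pass to an extended matroid $\Mext$ is exactly what the paper does. But two concrete choices in your setup prevent the argument from closing, and neither is cosmetic. First, your greedy scores candidates by their $3r$-ball marginal, whereas the paper maximizes $w(B(t,r)\cap U)$ with $U=X\setminus B(T_{\le i-1},3r)$: it scores by the $r$-ball but still removes the $3r$-ball. This asymmetry is the crux. The invariant actually maintained is the mixed-radius quantity $w\bigl(B(\{t_1,\dots,t_j\},3r)\cup B(\{f_{j+1},\dots,f_\rho\},r)\bigr)\ge w(B(F,r))$, and the exchange splits on whether $B(t_j,2r)\cap F_{\ge j}$ is empty. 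If nonempty, pick $f_j$ there; then $f_j$ is a legal Rado representative for $t_j$ and $B(f_j,r)\subseteq B(t_j,3r)$, so the invariant survives trivially. If empty, then $B(t_j,r)$ is disjoint from every $B(f,r)$ with $f\in F_{>j}$, so the greedy marginal at radius $r$ relative to the prefix equals the marginal relative to the prefix together with all remaining optimal $r$-balls, and the greedy inequality pushes through. With your $3r$-greedy this disjointness fails: when $d(t_j,f)\in(2r,4r]$ neither case applies, and one can make $B(t_j,3r)$ lie almost entirely inside $B(F_{>j},r)$ while still beating $f_j$ on $3r$-marginal, so the invariant drops. Widening the ``far'' threshold to $4r$ restores disjointness but then the ``close'' case no longer yields a valid representative at Rado radius $2r$, and enlarging the Rado radius pushes the final guarantee past $5r$.

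Second, your deformation lives in $\Mrel$ starting from a basis $O_0\supseteq C^*$, whereas the paper's $\Mext$ adjoins a disjoint copy $\overline{F}$ of $F$ in which each $\overline{f}$ is \emph{forced} to use $f$ as its representative. This rigidity is what makes the ``close'' case go: the inductive hypothesis that $T_{\le j-1}\cup\overline{F}_{\ge j}$ is a basis of $\Mext$ guarantees a system of representatives in which $\overline{f_j}$ is represented by $f_j$ itself, and that very $f_j$ then represents $t_j$. If you only know $T_{\le j-1}\cup F_{\ge j}$ is independent in $\Mrel$, the representative of $f_j$ need not be $f_j$, and the swap breaks. So the missing ideas are: greedy at radius $r$ (not $3r$), the mixed-radius invariant, and the $\Mext$ construction with forced representatives for the optimal centers enabling the two-case exchange.
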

Observe first that \Cref{thm:fromRToFiveR} clearly implies \Cref{thm:5approx}.
An easy way to see this is by simply running our procedure for each of the $q=O(|X|^2)$ distinct distances $r_1,\ldots, r_q$ (say with $r_1 < r_2 < \ldots < r_q$) between points in $X$, and return the best solution found.\footnote{When talking about distances between points in $X$, we also consider the zero distance; hence, $r_1=0$.}
As one of these $O(|X|^2)$ distances corresponds to the optimal radius, we are sure to obtain a $5$-approximation.
A more efficient alternative is to use binary search for $r$ over the distances $r_1, \ldots, r_q$ with the goal to find an index $i\in \{1,\ldots, q\}$ such that the following holds (where the first condition is ignored for $i=1$):
\begin{enumerate}[nosep,topsep=0.4em,label=(\roman*)]
\item our algorithm, run with $r=r_{i-1}$, fails to return a solution of radius at most $5 r_{i-1}$, and
\item our algorithm, run with $r=r_i$, returns a solution of radius at most $5 r_i$.
\end{enumerate}
In this case, the solution of radius $5 r_i$ is a $5$-approximation to the instance because the first point implies that the optimal radius must be at least $r_i$ as our procedure did not succeed in returning a solution of radius $5 r_{i-1}$ when run with $r=r_{i-1}$.
Hence, this binary search approach leads to $O(\log|X|)$ calls to our procedure that returns a solution of radius at most $5r$ if a solution of radius $r$ exists.

Hence, in what follows, let $r\in \mathbb{R}_{\geq 0}$ be such that a solution of radius $r$ exists to our Robust Matroid Center instance, and it remains to show how to compute a solution of radius $5r$, thus implying \Cref{thm:fromRToFiveR}.

\smallskip

As mentioned, the algorithm we use to prove \Cref{thm:fromRToFiveR} works on a matroid $\Mrel=(X,\mathcal{J})$ that is a relaxed version of the matroid $M$, i.e., $\mathcal{I}\subseteq \mathcal{J}$, and whose independent sets are defined as follows:
\begin{equation}\label{eq:defMrel}
\mathcal{J} \coloneqq \{J \subseteq X \colon \text{there exists } I\in \mathcal{I} \text{ and a bijection }\phi:J\to I \text{ with } d(\phi(e),e)\leq 2r \; \forall e\in J\}\enspace.
\end{equation}

Indeed, $\Mrel$ is a matroid, namely a so-called \emph{Rado matroid} (derived from $M$). Rado matroids are formally defined as follows.
\begin{definition}[Rado matroid]\label{def:radoMat}
Let $M=(X,\mathcal{I})$ be a matroid, $Y$ be a finite ground set, and, for $y\in Y$, let $X_y\subseteq X$.
Then the \emph{Rado matroid} $\overline{M}=(Y,\mathcal{J})$ induced by $(M,\{X_y\}_{y\in Y})$ is defined by the independent sets
\begin{equation}\label{eq:radoIndep}
\mathcal{J}\coloneqq \{J\subseteq Y \colon \text{there exists $I\in \mathcal{I}$ and a bijection $\phi\colon J\to I$ with $\phi(y)\in X_y \;\forall y\in J$}\}.
\end{equation}
Moreover, for $J\in \mathcal{J}$, a set $I\in \mathcal{I}$ satisfying~\eqref{eq:radoIndep} is called a \emph{system of distinct representatives for $J$}.
\end{definition}
Note that $\mathcal{J}$ as defined in~\eqref{eq:defMrel} is indeed of the form described in~\eqref{eq:radoIndep}, which follows by setting $X_e = B(e,2r)$ for $e\in X$.

A well-known fact about Rado matroids is that, given a set $J \in \mathcal{J}$ (and an independence oracle for $M$), one can efficiently determine a system of distinct representatives for $J$.
(For completeness, we recall this basic, yet for us important, fact in \Cref{sec:distinctRepRado}.)

Note that $\Mrel$ has a natural interpretation as a relaxed version of $M$.
Namely, one can think of an independent set $J\in \mathcal{J}$ as committing to open, for each $e\in J$, a (distinct) center at a distance of at most $2r$ from $e$, and $J\in \mathcal{J}$ guarantees that this is possible such that the opened centers are independent in $M$.  

Our procedure is highlighted below as \Cref{alg:greedy}.\footnote{We denote by $\rank(M)$ the rank of the matroid $M$, i.e., the cardinality of a maximum cardinality independent set in $M$.}
It greedily constructs an independent set $T$ in $\Mrel$ by successively choosing a center $t$ of a ball of radius $r$ that covers the largest number of yet uncovered points and that can be added to $T$ such that the set of chosen centers remains independent in the relaxed matroid. After choosing $t$, all points within distance $3r$ of $t$ are declared as covered.
Finally, a set $R\in \mathcal{I}$ of distinct representatives of $J$ is returned, which exists by definition of $\Mrel$ and, as mentioned, can be found efficiently by well-known results on Rado matroids.

We make no assumptions on which maximizer is used in the for-loop in case of ties.

\begin{algorithm}[ht]
\DontPrintSemicolon
\setstretch{1.3}
\caption{}
\label{alg:greedy}
$U = X$ \tcp*{uncovered points}
$T = \emptyset$ \tcp*{picked centers forming an independent set in $\Mrel$}
\For{$i \leftarrow 1$ \KwTo $\rank(M)$}
{
determine  $t_i \in \argmax_{t\in X} \left\{ w(B(t,r)\cap U) \colon T \cup  \{t\} \in \mathcal{J} \right\}$\\
$T \leftarrow T \cup \{t_i\}$\\
$U\leftarrow U\setminus B(t_i,3r)$\\
}
\Return a set $R\in \mathcal{I}$ of distinct representatives for $T$
\end{algorithm}
 
\section{Analysis}\label{sec:analysis}

We now show that, given a Robust Matroid Center instance for which a solution of radius $r$ exists, \Cref{alg:greedy} efficiently finds a solution of radius $5r$, thus implying \Cref{thm:fromRToFiveR}.
As already discussed, \Cref{alg:greedy} is clearly efficient.
Hence, it remains to show that the set $R$ returned by \Cref{alg:greedy} fulfills
\begin{equation}\label{eq:toShowCoverageROk}
w(B(R,5r)) \geq m.
\end{equation}
To show~\eqref{eq:toShowCoverageROk}, let $T\subseteq X$ be the set computed by our algorithm (for which $R$ are distinct representatives) 
and we prove that $w(B(T,3r))$ is at least as large as the weight that can be covered by any independent set of centers within a radius of $r$.
In other words, let $F\in \mathcal{I}$ be a solution of radius $r$, i.e., $w(B(F,r)) \geq m$, which exists by assumption.
Moreover, without loss of generality, we assume that $F$ is a basis, hence, $|F|=\rho$, where $\rho$ is the rank of the matroid $M$.
In the following, we will show
\begin{equation}\label{eq:toShowCoverageTOk}
w(B(T,3r))\geq w(B(F,r)),
\end{equation}
which implies~\eqref{eq:toShowCoverageROk} because $w(B(F,r))\geq m$ and each point in $T$ has distance at most $2r$ from some point in $R$ (in particular, from its representative in $R$), thus implying $B(R,5r)\supseteq B(T,3r)$.

Note that we can also assume that $T$ is a basis in $\Mrel$.
Indeed, the only way how $T$ may not be a basis, is that the algorithm picks the same element multiple times.
Except for the first time, all other iterations where that element gets picked will not cover any additional points of $U$.
Hence, at such an iteration, any other point $t\in X$ with $T\cup \{t\}\in \mathcal{J}$ could have been picked.
Thus, if $T$ is not a basis, we can simply augment it to an arbitrary basis in $\Mrel$, which will not increase its coverage.
Note that this is also a legal outcome of our algorithm, as each element that was added to obtain a basis could have been picked instead of one that was picked multiple times.

To prove~\eqref{eq:toShowCoverageTOk}, we define an auxiliary matroid $\Mext$, which is an extension of $\Mrel$.
In $\Mext$, both $T$ and the centers $F$ (to be precise, a copy of it) are independent, and we show how $F$ can be transformed into $T$ one element at a time, while preserving coverage requirements in a well-defined way (to be specified in \Cref{lem:morphOursIntoF}).
$\Mext=(X\cup \overline{F},\Jext)$ is also a Rado matroid.
Its ground set is $X\cup \overline{F}$, where $\overline{F}$ is a copy of the points in $F$ (disjoint from $X$). 
To relate points in $\overline{F}$ to their counterparts in $F$, when referring to an element $\overline{f}\in \overline{F}$, we denote by $f$ the corresponding copy in $F$; analogously, for a set $\overline{H}\subseteq \overline{F}$, we denote by $H\subseteq X$ the set of points in $X$ that correspond to $\overline{H}$.
The independent sets $\Jext$ of $\Mext$ are given by
\begin{equation*}
\Jext \coloneqq \left\{
J\cup \overline{H} \ \middle \vert 
\begin{array}{c}   J\subseteq X,\, \overline{H}\subseteq\overline{F}  \text{ such that there exists a set } I\subseteq X\setminus H \text{ with } I\cup H\in \mathcal{I}\\
 \text{and a bijection } \phi\colon J\to I \text{ with } d(\phi(e),e)\leq 2r \;\forall e\in J
\end{array}
\right\}.
\end{equation*}

The family $\Jext$ indeed corresponds to the independent sets of a Rado matroid, as described by~\eqref{eq:radoIndep}, which follows by setting $X_e = B(e,2r)$ for $e\in X$ and $X_{\overline{f}} = 
\{f\}$ for $\overline{f}\in \overline{F}$.
In words, the matroid $\Mext$ extends the matroid $\Mrel$ with additional elements $\overline{F}$ and, for any $\overline{f}\in \overline{F}$, the representative of $\overline{f}$ in a set of distinct representatives is always $f$.
In particular, $\mathcal{J}\subseteq \Jext$. 
The way $\Mext$ is constructed allows us to capture both the solution given by $F$ and the points $T$ computed by our algorithm as independent sets in $\Mext$, where we represent $F$ by $\overline{F}$, and $T$ by itself.
In short, points of $X$ follow our relaxed notion in the sense that each point in $X$ can be replaced by another point within distance $2r$, whereas points in $\overline{F}$ do not have this flexibility.
Note that the rank of $\Mext$ is $\rank(\Mext)=\rho$.
This holds because the rank can clearly not be larger because $\Mext$ is a Rado matroid derived from $M$; moreover, any basis of $M$ is independent in $\Mext$.

The following statement shows how we can transform $F$ into $T$ as discussed above. (We recall that $t_1,\dots, t_{\rho}$ is the numbering of the elements of $T$ as constructed in \cref{alg:greedy}.)
\begin{lemma}
\label{lem:morphOursIntoF}
There is an ordering $f_1, \dots, f_\rho$ of the points of $F$ such that for all $j\in \{0,1,\dots,\rho\}$:
\begin{enumerate}[nosep,topsep=0.2em,itemsep=0.2em,label=(\roman*)]
\item\label{item:mixedBasis}
$\{t_1,\dots, t_j\} \cup \{\overline{f_{j+1}},\dots, \overline{f_{\rho}}\}$ is a basis of $\Mext$, and

\item\label{item:enoughCoverage}
$w\Big(B(\{t_1,\dots,t_j\},3r)\cup B(\{f_{j+1},\dots, f_{\rho}\},r)\Big) \geq w(B(F,r))$.
\end{enumerate}
\end{lemma}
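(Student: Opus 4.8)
The plan is to establish the two properties \ref{item:mixedBasis} and \ref{item:enoughCoverage} \emph{simultaneously} by induction on $j$, building the ordering one point at a time: having fixed $f_1,\dots,f_j$ so that $B_j \coloneqq \{t_1,\dots,t_j\}\cup\{\overline{f_{j+1}},\dots,\overline{f_\rho}\}$ is a basis of $\Mext$ and \ref{item:enoughCoverage} holds, I pick a not-yet-used point of $F$ to play the role of $f_{j+1}$ and relabel the remaining ones arbitrarily. The base case $j=0$ is immediate: $\overline F$ is independent in $\Mext$ (take $F$ itself, which is a basis of $M$, as its system of distinct representatives) and $|\overline F|=\rho=\rank(\Mext)$, so $\overline F$ is a basis; and \ref{item:enoughCoverage} reads $w(B(F,r))\ge w(B(F,r))$.

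For the inductive step, fix a system of distinct representatives $I_0\in\mathcal I$ of $B_j$; it is a basis of $M$ consisting of $f_{j+1},\dots,f_\rho$ together with one point $\phi(t_i)\in B(t_i,2r)$ for each $i\le j$. I would distinguish two cases according to whether some not-yet-used point of $F$ lies within distance $2r$ of $t_{j+1}$. If such a point $f$ exists, set $f_{j+1}\coloneqq f$: reassigning $f_{j+1}$ (formerly the representative of $\overline{f_{j+1}}$) to be the representative of $t_{j+1}$ keeps the representative set equal to $I_0$, so $B_{j+1}\coloneqq B_j-\overline{f_{j+1}}+t_{j+1}$ still has a system of distinct representatives and, being independent of size $\rho$, is a basis, giving \ref{item:mixedBasis}; and $d(f_{j+1},t_{j+1})\le 2r$ gives $B(f_{j+1},r)\subseteq B(t_{j+1},3r)$, so the coverage set for $j{+}1$ contains that for $j$ and \ref{item:enoughCoverage} follows from the inductive hypothesis.

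The interesting case is when no remaining point of $F$ is within $2r$ of $t_{j+1}$. Since $\{t_1,\dots,t_{j+1}\}\subseteq T$ is independent in $\Mext$ while $B_j\supseteq\{t_1,\dots,t_j\}$ is a basis, the fundamental circuit of $t_{j+1}$ with respect to $B_j$ cannot be contained in $\{t_1,\dots,t_{j+1}\}$, hence it contains some $\overline{g}$ with $g\in\{f_{j+1},\dots,f_\rho\}$; set $f_{j+1}\coloneqq g$, so $B_{j+1}\coloneqq B_j-\overline{f_{j+1}}+t_{j+1}$ is a basis. Restricting the representatives of $B_j$ shows $\{t_1,\dots,t_j,f_{j+1}\}\in\mathcal J$ (represent $f_{j+1}$ by itself), so $f_{j+1}$ was an admissible choice in the $(j{+}1)$-st iteration of \Cref{alg:greedy}, and therefore $w(B(t_{j+1},r)\setminus A_j)\ge w(B(f_{j+1},r)\setminus A_j)$, where $A_j\coloneqq B(\{t_1,\dots,t_j\},3r)$ is the complement of the uncovered set at that point. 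The key structural fact in this case is that $B(t_{j+1},r)$ is disjoint from $B(\{f_{j+1},\dots,f_\rho\},r)$, since a common point would put a remaining point of $F$ within $2r$ of $t_{j+1}$; combining this disjointness with the greedy inequality and enlarging the ball around $t_{j+1}$ from $r$ to $3r$, a short inclusion--exclusion comparison of the two coverage sets shows $w\big(B(\{t_1,\dots,t_{j+1}\},3r)\cup B(\{f_{j+2},\dots,f_\rho\},r)\big)\ge w\big(B(\{t_1,\dots,t_j\},3r)\cup B(\{f_{j+1},\dots,f_\rho\},r)\big)$, which closes the induction via the hypothesis.

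I expect the second case to be the crux. Two things have to work there: first, one must pull a usable point of $F$ out of the fundamental circuit so that the basis exchange is legal; second, one must convert the greedy guarantee --- which is phrased for $r$-balls against the currently uncovered set --- into the $3r$-ball statement that \ref{item:enoughCoverage} demands, correctly accounting for the overlap between $B(f_{j+1},r)$ and the $r$-balls around the other remaining points of $F$. The disjointness observation available in the ``far'' case is precisely what makes this accounting lossless, and choosing the right case distinction up front (distance $\le 2r$ versus $>2r$ from $t_{j+1}$) is what keeps the whole argument elementary.
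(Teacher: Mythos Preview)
Your proposal is correct and follows essentially the same approach as the paper: the same induction, the same case distinction on whether some remaining point of $F$ lies within distance $2r$ of the new $t$-point, the same direct reassignment of representatives in the near case, and the same fundamental-circuit exchange plus greedy-versus-$f_{j+1}$ comparison (using the disjointness $B(t_{j+1},r)\cap B(F_{\geq j+1},r)=\emptyset$) in the far case. The paper simply spells out the ``short inclusion--exclusion comparison'' you allude to as an explicit chain of (in)equalities.
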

First note that, by setting $j=\rho$, \Cref{lem:morphOursIntoF} implies~\eqref{eq:toShowCoverageTOk} as desired.
Hence, it remains to prove \Cref{lem:morphOursIntoF}.

\begin{proof}[Proof of \Cref{lem:morphOursIntoF}]
We will inductively determine the ordering $f_1,\dots, f_\rho$ of the elements in $F$ and prove the claimed statement.
The case $j=0$ is trivially true.
Hence, fix an index $j\in \{1,\dots, \rho\}$ and suppose that we have already determined elements $f_1,\dots, f_{j-1}$ of $F$ such that both~\ref{item:mixedBasis} and~\ref{item:enoughCoverage} hold for any index below $j$.
Note that \ref{item:mixedBasis} is a condition on the set $\{t_1,\dots, t_j\}\cup \overline{F}_{>j}$, where $\overline{F}_{>j}\coloneqq \overline{F}\setminus \{\overline{f_1},\dots,\overline{f_j}\}$; hence, it indeed only depends on the elements $f_1,\dots f_j$.
We use notation analogous to $\overline{F}_{>j}$ also for other numbered sets, like $F_{>j}\coloneqq F \setminus \{f_1,\dots, f_j\}$ and $T_{\leq j}=\{t_1,\dots, t_j\}$.
We distinguish two cases, depending on whether $B(t_j,2r) \cap F_{\geq j}$ is empty (see \Cref{fig:lemma}).
Note that to show~\ref{item:mixedBasis}, we only need to show that $T_{\leq j}\cup \overline{F}_{>j}$ is independent in $\Mext$, which immediately implies that the same set is a basis of $\Mext$ because $\rank(\Mext)=\rho$.

\begin{figure}[t]
\centering
\includegraphics[width=\textwidth]{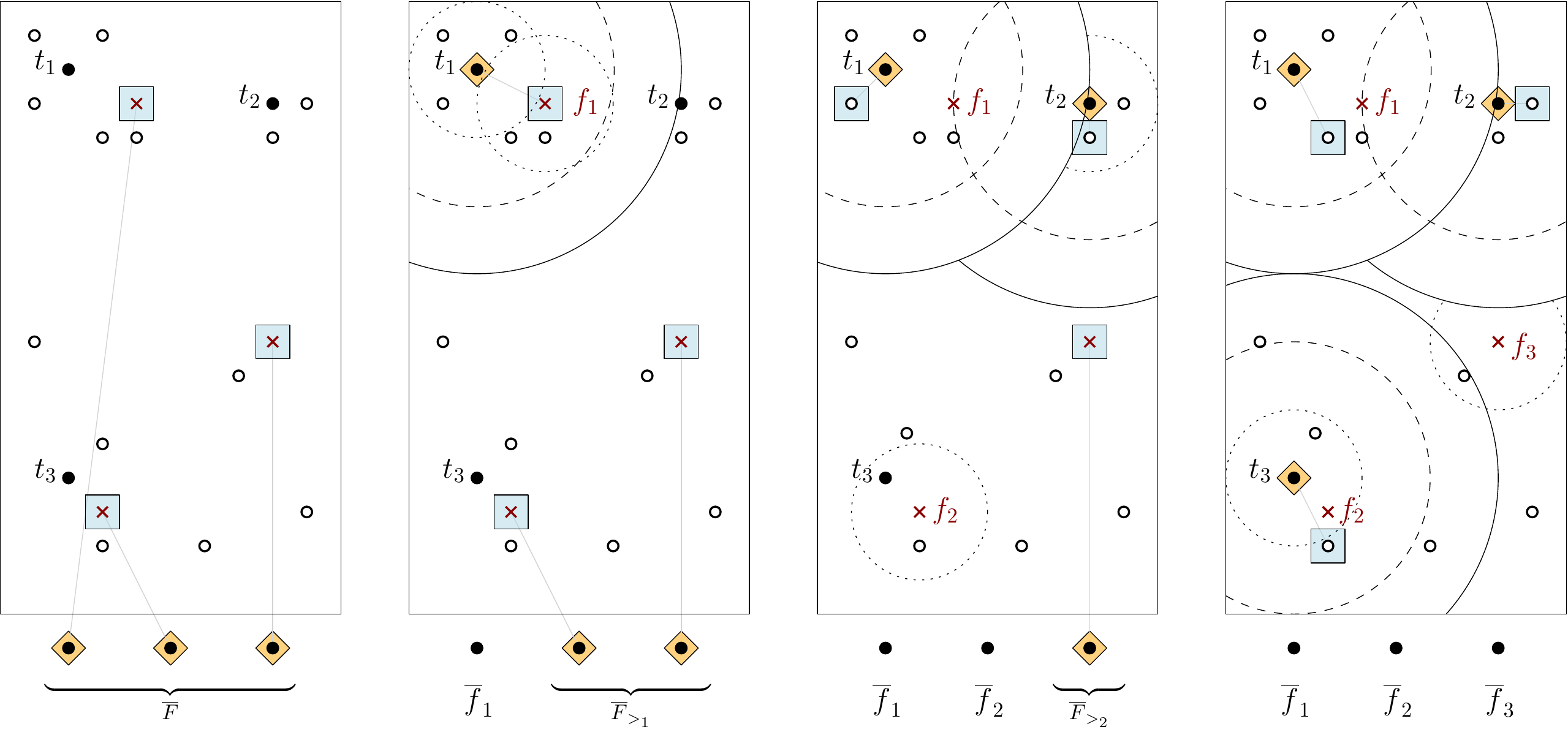}
\caption{
Illustration of a possible sequence of the exchange steps in the proof of \Cref{lem:morphOursIntoF}.
Starting from the basis $\overline{F}$ of $\Mext$ (and the corresponding solution $F$ of radius $r$) on the left, the inductive step is performed three times.
We arrive at the basis $T=\{t_1,t_2,t_3\}$ (and its representatives, which is the solution $R$ of radius $5r$ returned by the algorithm) on the right.
The points of $F$ are indicated by red crosses, and the basis maintained in the Rado matroid $\Mext$ is indicated by orange diamonds.
Each point in a basis is connected by a gray line to its representative, which is indicated by a blue square. 
Balls of radius $2r$ are indicated by dashed circles.
The parts of the balls of radius $r$ and radius $3r$ that are uncovered at the given iteration are indicated by dotted and solid lines, respectively.
Note that each exchange step may lead to changes in the representatives of the points in $T$ that are part of the basis; though, we do not need the set of representatives for intermediate iterations but only return the final one.
}
\label{fig:lemma}
\end{figure}

\medskip

\noindent \emph{Case 1:} $B(t_j,2r)\cap F_{\geq j} \neq \emptyset$.

In this case we choose $f_j\in B(t_j,2r)\cap F_{\geq j}$.
Property~\ref{item:mixedBasis} holds because a set of distinct representatives for $T_{\leq j-1}\cup \overline{F}_{\geq j}$ with respect to $\Mext$, which exists by the inductive hypothesis, is also a set of distinct representatives for $T_{\leq j}\cup \overline{F}_{> j}$.
Indeed, $f_j$, which had to be used as a representative for $\overline{f_j}$ for the set $T_{\leq j-1}\cup \overline{F}_{\geq j}$ is also a valid representative for $t_j$ because $f_j\in B(t_j,2r)$.

Moreover,~\ref{item:enoughCoverage} holds because $B(f_j,r)\subseteq B(t_j,3r)$.

\medskip

\noindent \emph{Case 2:} $B(t_j,2r)\cap F_{\geq j} = \emptyset$.

Adding $t_j$ to the basis $T_{\leq j-1} \cup \overline{F}_{\geq j}$ of $\Mext$ creates a unique circuit $C \subseteq T_{\leq j}\cup \overline{F}_{\geq j}$.
Note that $C\cap \overline{F}_{\geq j}\neq \emptyset$, because $T_{\leq j}$ is an independent set in $\Mrel$, which follows from $T_{\leq j}\subseteq T$ and the fact that $T\in \Jext$ because $T\in \mathcal{J}$ by construction and $\mathcal{J}\subseteq \Jext$.
Hence, let $\overline{f}\in C\cap \overline{F}_{\geq j}$ and we set $f_j = f$.
Property~\ref{item:mixedBasis} holds because $T_{\leq j}\cup \overline{F}_{>j}$ is obtained from $T_{\leq j}\cup \overline{F}_{\geq j}$, by removing an element from its unique circuit $C$; hence, it is circuit-free and thus independent in $\Mext$.

To show that property~\ref{item:enoughCoverage} holds, consider the $j$th iteration of the for-loop in \Cref{alg:greedy}.
Let
\begin{equation*}
U\coloneqq X \setminus B(T_{\leq j-1},3r)
\end{equation*}
be the state of $U$ at that iteration of the algorithm.
Note that at this iteration, the element $f_j$ was a candidate for $t_j$ in the maximization problem of \Cref{alg:greedy}.
Hence,
\begin{equation}\label{eq:tjAsGoodAsfj}
w(B(t_j,r)\cap U) \geq w(B(f_j,r)\cap U).
\end{equation}
By singling out the marginal contribution of the element $t_j$ on the left-hand side of the inequality~\ref{item:enoughCoverage} and further expanding, we obtain as desired:
\begin{align*}
w\left(B(T_{\leq j},3r)\cup B(F_{>j},r)\right)\hspace*{-2cm}&\\
&= w(B(T_{\leq j-1},3r) \cup B(F_{>j},r)) + w\left(B(t_j,3r)\setminus (B(T_{\leq j-1},3r) \cup B(F_{>j},r))\right)\\
&\geq w(B(T_{\leq j-1},3r) \cup B(F_{>j},r)) + w\left(B(t_j,r)\setminus (B(T_{\leq j-1},3r) \cup B(F_{>j},r))\right)\\
&= w(B(T_{\leq j-1},3r) \cup B(F_{>j},r)) + w\left(B(t_j,r)\setminus (B(T_{\leq j-1},3r)\right)\\
&= w(B(T_{\leq j-1},3r) \cup B(F_{>j},r)) + w\left(B(t_j,r) \cap U\right)\\
&\geq w(B(T_{\leq j-1},3r) \cup B(F_{>j},r)) + w\left(B(f_j,r) \cap U\right)\\
&\geq w(B(T_{\leq j-1},3r) \cup B(F_{\geq j},r))\\
&\geq w(B(F,r)),
\end{align*}
where the above relations hold due to the following.
The second equality holds by the disjointness assumption of the second case, i.e., $B(t_j,2r)\cap F_{\geq j}=\emptyset$, which implies $B(t_j,r) \cap B(F_{\geq j},r)=\emptyset$ and thus also $B(t_j,r) \cap B(F_{> j},r)=\emptyset$.
The third one follows from the definition of $U$. 
The second inequality is a consequence of~\eqref{eq:tjAsGoodAsfj}.
Finally, the last inequality holds due to the inductive hypothesis.
\end{proof}

\appendix

\section{Finding systems of distinct representatives in Rado matroids}\label{sec:distinctRepRado}

For completeness, we briefly repeat below how one can efficiently find systems of distinct representatives in Rado matroids.
\begin{lemma}
\label{lem:representatives}
Let $M=(X, \mathcal{I})$ be a matroid given by an independence oracle, let $Y$ be a finite ground set, and $X_y\subseteq X$ for $y\in Y$.
Let $\overline{M}=(Y,\mathcal{J})$ be the Rado matroid induced by $(M,\{X_y\}_{y\in Y})$.
Then, for any $J\subseteq Y$, one can efficiently determine whether $J\in \mathcal{J}$ and, if so, efficiently compute a system of distinct representatives for $J$.
\end{lemma}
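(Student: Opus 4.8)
The plan is to reduce the two tasks---deciding whether $J\in\mathcal{J}$ and, if so, producing a system of distinct representatives---to a single matroid intersection computation on an auxiliary bipartite graph. Fix $J\subseteq Y$, regard $J$ and $X$ as disjoint, and form the bipartite graph with vertex set $J\dotcup X$ and edge set $E\coloneqq\{\{y,x\}\colon y\in J,\ x\in X_y\}$; for $e=\{y,x\}\in E$ write $\pi(e)\coloneqq x$ for its endpoint in $X$. On the ground set $E$ I would define two matroids: $M_1$, in which $E'\subseteq E$ is independent iff no $y\in J$ is incident to two edges of $E'$ (a partition matroid, partitioning $E$ by the endpoint in $J$); and $M_2$, in which $E'\subseteq E$ is independent iff $\pi$ is injective on $E'$ and $\pi(E')\in\mathcal{I}$. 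That $M_2$ is a matroid is the standard fact that adding parallel copies of elements preserves the matroid property: start from $M$ restricted to $\pi(E)$ and replace each $x$ by a parallel class identified with $\pi^{-1}(x)$.

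Next I would record the exact correspondence between common independent sets of $M_1,M_2$ of size $|J|$ and systems of distinct representatives for $J$. If $E'$ is such a common independent set, then (by a pigeonhole count using $M_1$-independence) each $y\in J$ meets exactly one edge of $E'$, which defines $\phi\colon J\to X$ with $\phi(y)\in X_y$; injectivity of $\pi$ on $E'$ makes $\phi$ a bijection onto $I\coloneqq\pi(E')$, and $I\in\mathcal{I}$, so $I$ is a system of distinct representatives for $J$. Conversely, a system of distinct representatives $I$ realised by a bijection $\phi\colon J\to I$ yields the common independent set $\{\{y,\phi(y)\}\colon y\in J\}$ of size $|J|$. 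Hence $J\in\mathcal{J}$ iff a maximum common independent set of $M_1$ and $M_2$ has size $|J|$, and in that case reading off $\phi$ from any such set produces the required representatives.

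For efficiency I would invoke a polynomial-time matroid intersection algorithm (Edmonds), which computes a maximum common independent set using polynomially many independence-oracle queries to $M_1$ and $M_2$ plus polynomial additional work. The oracle for $M_1$ is immediate; the oracle for $M_2$, given $E'$, first tests combinatorially that $\pi|_{E'}$ is injective and then issues a single query to the independence oracle of $M$ to check $\pi(E')\in\mathcal{I}$. So the whole procedure runs in polynomial time with polynomially many calls to the oracle of $M$: compute a maximum common independent set $E^\star$, report $J\notin\mathcal{J}$ if $|E^\star|<|J|$, and otherwise output $I=\pi(E^\star)$ together with the bijection $\phi$ read off from $E^\star$.

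The argument is essentially routine, so I do not expect a genuine obstacle; the two points that need a little care are the verification that $M_2$ is a matroid (the parallel-extension argument above, or equivalently the observation that $M_2$ is itself a Rado matroid induced by singleton sets $\{\pi(e)\}$) and the bookkeeping in the bijection between size-$|J|$ common independent sets and systems of distinct representatives. The only place where more work would be needed is if one insisted on a fully self-contained treatment that does not cite matroid intersection as a black box, in which case one would have to reprove the correctness and polynomiality of the augmenting-path routine for $M_1\cap M_2$---but that is exactly the well-known ingredient we are entitled to use here.
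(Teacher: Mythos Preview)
Your proposal is correct and follows essentially the same route as the paper: both reduce the question to a single matroid intersection computation. The only cosmetic difference is that the paper works directly on the ground set $X$, intersecting $M$ with the transversal matroid of the family $\{X_y\}_{y\in J}$, whereas you move to the edge set $E$ of the bipartite graph and intersect a partition matroid with a parallel extension of $M$; these two encodings are standard equivalents of one another.
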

\begin{proof}
Let $\widehat{\mathcal{M}} \coloneqq (X, \widehat{\mathcal{I}})$, where 
\[
\widehat{\mathcal{I}} \coloneqq \{ I \subseteq X \colon \text{ there is a bijection $\phi\colon J\to I$ with $\phi(y)\in X_y \;\forall y\in J$} \}. 
\]
$\widehat{\mathcal{M}}$ is well-known to be a matroid (see, e.g., \cite[Section~7.3]{W10}).
Independent sets in $\widehat{\mathcal{M}}$ fulfill the property of a system of distinct representatives for $J$ except for not needing to be independent in $M$.
Hence, a set $I\subseteq X$ is a system of distinct representatives for $J$ if and only if $I\in \widehat{\mathcal{I}}$ and $I\in \mathcal{I}$.
Thus, a system of distinct representatives for $J$ can be found using matroid intersection if it exists; otherwise $J\not\in \mathcal{J}$.
\end{proof}
 
\printbibliography

\end{document}